\newtheorem{theorem}{Theorem}
\newtheorem{proposition}[theorem]{Proposition}
\newtheorem{lemma}[theorem]{Lemma}
\newtheorem{corollary}[theorem]{Corollary}
\newtheorem{definition}[theorem]{Definition}
\newcommand{\sbs}{\texttt{\textup{searchball}}}
\newcommand{\sbf}{\texttt{\textup{searchball-fast}}}
\newcommand{\vbl}{\ensuremath{{\rm vbl}}}
\newcommand{\poly}{\ensuremath{{\textup{\rm poly}}}}
\newcommand{\vol}[3]{\ensuremath{{\rm vol}^{(#1)}(#2,#3)}}
\newcommand{\ball}[3]{\ensuremath{B^{(#1)}_{#2}(#3)}}
\newcommand{\schoening}{\texttt{\textup{Sch\"oning}}}
\newcommand{\ignore}[1]{}
\newcommand{\PBkS}[1]{\ensuremath{\textsc{Promise-Ball-}#1\textsc{-SAT}}}
\newcommand{\BkS}[1]{\ensuremath{\textsc{Ball-}#1\textsc{-SAT}}}
\title{A Full Derandomization of
  Sch\"oning's $k$-SAT Algorithm}
\author{Robin A. Moser and Dominik Scheder \vspace{3mm}\\
Institute for Theoretical Computer Science\\
Department of Computer Science\\
ETH Z\"urich, 8092 Z\"urich, Switzerland\\
\texttt{\{robin.moser, dominik.scheder\}@inf.ethz.ch}
}
\begin{document}

\maketitle

\begin{abstract}
  Sch\"oning~\cite{Schoening99} presents a simple randomized algorithm
  for $k$-SAT with running time $O(a_k^n \poly(n))$ for 
  $a_k = 2(k-1)/k$. We give a deterministic version of this algorithm
  running in time $O( (a_k+\epsilon)^n \poly(n))$, where
  $\epsilon > 0$ can be made arbitrarily small.
\end{abstract}

\section{Introduction}

In 1999, Uwe Sch\"oning~\cite{Schoening99} gave an extremely simple
randomized algorithm for $k$-SAT. Ten years on, the fastest
algorithms for $k$-SAT are only slightly faster than his, and far more
complicated. His algorithm works as follows: Let $F$ be a $(\le k)$-CNF
formula over $n$ variables. Start with a random truth assignment. If
this does not satisfy $F$, pick an arbitrary unsatisfied clause $C$.
From $C$, pick a literal uniformly at random, and change the truth
value of its underlying variable, thus satisfying $C$.  Repeat this
reassignment step $O(n)$ times.  If $F$ is satisfiable, this finds a
satisfying assignment with probability at least
$$
\left(\frac{k}{2(k-1)}\right)^n \ .
$$
By repetition, this gives a randomized $O^*(1.334^n)$ algorithm for
$3$-SAT, an $O^*(1.5^n)$ for $4$-SAT, and so on (we use $O^*$ to
suppress polynomial factors in $n$). Shortly after Sch\"oning
published his algorithm, Dantsin, Goerdt, Hirsch, Kannan, Kleinberg,
Papadimitriou, Raghavan and Sch\"oning~\cite{dantsin} (henceforth {\em
  Dantsin et al.} for the sake of brevity) came up with a {\em
  deterministic} algorithm that can be seen as an attempt to
derandomize Sch\"oning's algorithm. We say {\em attempt} because its
running time is $O^*( (2k / (k+1))^n)$, which is exponentially slower
than Sch\"oning's. For example, this gives an $O^*(1.5^n)$ algorithm
for $3$-SAT and $O^*(1.6^n)$ for $4$-SAT. Subsequent papers have
improved upon this running time, mainly focusing on $3$-SAT: Dantsin
et al. already improve the running time for $3$-SAT to $O(1.481^n)$,
Brueggemann and Kern~\cite{BK04} to $O(1.473^n)$,
Scheder~\cite{Scheder08} to $O(1.465^n)$, and Kutzkov and
Scheder~\cite{kutzkov-scheder} to $O^*(1.439^n)$. All improvements
suffer from two drawbacks: First, they fall short of achieving the
running time of Sch\"oning's randomized algorithm, and second, they
are all fairly complicated. In this paper, we give a rather simple
deterministic algorithm with a running time that comes arbitrarily
close to Sch\"oning's, thus completely derandomizing his algorithm.
We also show how to derandomize Sch\"oning's algorithm for constraint
satisfaction problems, which are a generalization of SAT, allowing
more than two truth values.

\subsection{Notation}

We use the notational framework introduced in \cite{welzl05}. We
assume an infinite supply of propositional \emph{variables}. A
\emph{literal} $u$ is a variable $x$ or a complemented variable $\bar
x$. A finite set $C$ of literals over pairwise distinct variables is
called a \emph{clause} and a finite set of clauses is a \emph{formula}
in CNF (Conjunctive Normal Form). We say that a variable $x$
\emph{occurs} in a clause $C$ if either $x$ or $\bar x$ are contained
in it and that $x$ occurs in the formula $F$ if there is any clause
where it occurs. We write $\mbox{vbl}(C)$ or $\mbox{vbl}(F)$ to denote
the set of variables that occur in $C$ or in $F$, respectively.  We
say that $F$ is a $(\le k)$\emph{-CNF formula} if every clause has
size at most $k$.  Let such an $F$ be given and write
$V:=\mbox{vbl}(F)$.

A \emph{(truth) assignment} is a function $\alpha : V \rightarrow
\{0,1\}$ which assigns a Boolean value to each variable. A literal
$u=x$ (or $u=\bar x$) is \emph{satisfied by} $\alpha$ if $\alpha(x)=1$
(or $\alpha(x)=0$). A clause is \emph{satisfied by} $\alpha$ if it
contains a satisfied literal and a formula is \emph{satisfied by}
$\alpha$ if all of its clauses are. A formula is \emph{satisfiable} if
there exists a satisfying truth assignment to its variables.

If $\alpha$ and $\beta$ are two truth assignments over a set $V$ of
variables, then their \emph{(Hamming) distance} $d_H(\alpha,\beta)$ is
defined to be the number of variables $x \in V$ where $\alpha(x) \ne
\beta(x)$, i.e.  $d_H(\alpha,\beta) := |\{ x \in V \; | \; \alpha(x)
\ne \beta(x) \}|$.  For a given assignment $\alpha$, we denote the set
of all assignments $\beta$ with Hamming distance at most $r$ from
$\alpha$ by $B_r(\alpha) := \{ \beta : V \rightarrow \{0,1\} \ | \
d_H(\alpha,\beta) \le r \}$ and call this the \emph{Hamming ball of
  radius $r$ centered at $\alpha$}.

Formulas can be manipulated by permanently assigning values to
variables. If $F$ is a given CNF formula and $x \in \mbox{vbl}(F)$
then assigning $x \mapsto 1$ satisfies all clauses containing $x$
(irrespective of what values the other variables in those closes are
possibly assigned later) whilst it truncates all clauses containing
$\bar x$ to their remaining literals. We will write $F^{[x:=1]}$ to
denote the formula arising from doing just this, or equally
$F^{[u:=1]}$ where $u$ is a literal and we mean to assign the
underlying variable the value necessary to satisfy $u$. If $\beta$ is
a partial assignment, i.e., defined on a subset of $\vbl(F)$, then
$F^{[\beta]}$ denotes the formula we obtain from $F$ by permanently
setting the variables from those subset to their respective values
under $\beta$.

\subsection{Previous Work}

Both Sch\"oning's algorithm and its deterministic versions can be 
seen as not attacking SAT directly, but rather a parametrized
local search problem:

\begin{quotation}
  \textbf{$\PBkS{k}$:} Given a $(\le k)$-CNF formula $F$ over $n$
  variables, an assignment $\alpha$ to these variables, a natural
  number $r$, and the promise that the Hamming ball $B_r(\alpha)$
  contains a satisfying assignment.  Find any satisfying assignment
  to $F$.
\end{quotation}

Let us clarify what we mean by saying ``Algorithm $A$ solves
$\PBkS{k}$'': If $F$, $\alpha$, and $r$ are as described above, i.e.,
if $B_r(\alpha)$ contains a satisfying assignment, then $A$ must
return {\em some} satisfying assignment. We do not require this
assignment to lie in $B_r(\alpha)$, however.  On the other hand, if
$F$ is unsatisfiable, or $B_r(\alpha)$ contains no satisfying
assignment, the behavior is unspecified. Of course, since we can
quickly check any purported assignment that the algorithm outputs, we
can assume the algorithm always either returns a satisfying assignment
or \texttt{failure}.

Sch{\"o}ning's original random\-ized al\-go\-rithm for $\PBkS{k}$ as
described in the introductory section, henceforth called $\schoening$,
repeatedly selects any clause unsatisfied under $\alpha$, then
randomly picks a literal from that clause and flips the underlying
variable's value.  The algorithm gives up if a satisfying assignment
has not been encountered by the time $O(n)$ steps have been performed
(it is well-known and easy to check that $n/(k-2)$ correction steps
are sufficient to achieve optimal efficiency).

\begin{lemma}[Sch\"oning~\cite{Schoening99}]
  Let $F$ be a $(\le k)$-CNF formula, $\alpha$ a truth assignment to its
  variables, and $r \in \mathbb{N}$. If there is a satisfying
  assignment in $B_r(\alpha)$, then with probability at least
  $(k-1)^{-r}$, $\schoening$ returns a satisfying assignment.  By
  repetition, this gives a Monte-Carlo algorithm for $\PBkS{k}$ with
  running time $O^*((k-1)^r)$.
\label{lemma-2-r-random}
\end{lemma}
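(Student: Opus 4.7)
The plan is to fix any satisfying assignment $\beta^* \in B_r(\alpha)$ and to follow the Hamming distance $D_t := d_H(\alpha_t,\beta^*)$ during a run, where $\alpha_t$ denotes the current assignment after $t$ correction steps of $\schoening$. Put $j := d_H(\alpha,\beta^*) \le r$, so $D_0 = j$; whenever $D_t = 0$, $\schoening$ has found a satisfying assignment, and the whole proof reduces to lower-bounding the probability of this event.

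The key local fact is that when $\schoening$ picks an unsatisfied clause $C$, every literal of $C$ is falsified by $\alpha_t$, while at least one literal of $C$ is satisfied by $\beta^*$; the variable underlying such a literal must be one on which $\alpha_t$ and $\beta^*$ disagree. Since $|C|\le k$, a uniformly random literal of $C$ hits such a ``good'' variable with probability at least $1/k$, yielding $D_{t+1}=D_t-1$; otherwise $D_{t+1}\le D_t+1$. By a standard coupling (increasing the down-probability can only make hitting $0$ sooner), it suffices to lower-bound the probability of hitting $0$ for the \emph{pessimistic} walk $W_t$ on $\mathbb{N}$ that starts at $j$ and has down-probability exactly $1/k$ and up-probability exactly $(k-1)/k$.

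For that walk I would sum, over $i\ge 0$, the probability of trajectories of length $j+2i$ that first reach $0$ at the final step. By the ballot principle there are $\tfrac{j}{j+2i}\binom{j+2i}{i}$ such trajectories, each of probability exactly $(1/k)^{j+i}\bigl((k-1)/k\bigr)^i$. Setting $x=(k-1)/k^2$ and invoking the Catalan-type generating function identity $\sum_{i\ge 0}\tfrac{j}{j+2i}\binom{j+2i}{i}x^i = \bigl(2/(1+\sqrt{1-4x})\bigr)^j$, the fact that $\sqrt{1-4x}=(k-2)/k$ makes the sum collapse to $(1/k)^j\cdot(k/(k-1))^j = (k-1)^{-j}\ge (k-1)^{-r}$.

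The one obstacle I foresee is reconciling this infinite-walk bound with the fact that $\schoening$ actually runs for only $O(n)$ correction steps. The terms of the series decay geometrically because $(k-1)/k^2 \le 1/4$ while the Catalan-type coefficients grow like $4^i/i^{3/2}$, so the tail beyond $i=\Theta(n)$ is negligible; the $n/(k-2)$ step bound mentioned after the algorithm's description is precisely the cutoff that swallows this tail. Independent repetition $\Theta((k-1)^{r})$ times then boosts the success probability to a constant, giving the claimed $O^*((k-1)^r)$ Monte-Carlo algorithm.
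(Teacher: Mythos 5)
The paper never proves this lemma: it is imported verbatim from Sch\"oning's paper \cite{Schoening99} and used as a black box (the only in-text hint is the remark that $n/(k-2)$ correction steps suffice). So there is no in-paper proof to compare against; what you have written is essentially Sch\"oning's own argument, and it is correct. The reduction to a one-dimensional walk via $D_t = d_H(\alpha_t,\beta^*)$, the observation that an unsatisfied clause contains at least one of the $\le k$ literals on which $\alpha_t$ and $\beta^*$ disagree (so each step is $-1$ with probability at least $1/k$ and $+1$ otherwise), the monotone coupling with the pessimistic $(1/k,(k-1)/k)$ walk, and the first-passage computation are all sound; your generating-function evaluation checks out, since $\sqrt{1-4(k-1)/k^2}=(k-2)/k$ collapses the sum to $(1/(k-1))^j$, matching the classical gambler's-ruin answer $(p/q)^j$.

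Two small points deserve care. First, your tail estimate is stated slightly too optimistically: $(k-1)/k^2\le 1/4$ with \emph{equality} at $k=2$, where the coefficient growth $4^i/i^{3/2}$ exactly cancels the geometric factor and the decay is only polynomial; the geometric tail bound, and indeed the whole lemma in its $O(n)$-step form, requires $k\ge 3$ (consistent with the paper's $n/(k-2)$ remark being undefined at $k=2$). Second, truncating the infinite series at $i=\Theta(n)$ loses a constant (or $1/\poly(n)$) factor, so strictly you obtain success probability $\Omega((k-1)^{-r})$ rather than the stated clean $(k-1)^{-r}$; this is immaterial for the $O^*((k-1)^r)$ Monte-Carlo claim, but worth a sentence. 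The assertion that $n/(k-2)$ is ``precisely'' the right cutoff would need the sharper $i^{-3/2}(4x)^i$ term bound rather than the crude one; any $Cn$ with $C=C(k)$ large enough is all the lemma needs.
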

Sch\"oning turns this lemma into an algorithm for $k$-SAT by
choosing the assignment $\alpha$ uniformly at random from
all $2^n$ truth assignments:
\begin{theorem}[Sch\"oning~\cite{Schoening99}]
  There is a randomized algorithm that runs in polynomial
  time and finds a satisfying assignment of $F$ with probability
  $$
  \left(\frac{k}{2(k-1)}\right)^n \ , 
  $$
  provided $F$ is satisfiable.
  \label{theorem-schoening}
\end{theorem}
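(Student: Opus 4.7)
The plan is to reduce the theorem to Lemma~\ref{lemma-2-r-random} by a single probabilistic averaging step: first pick the starting assignment $\alpha$ uniformly at random from $\{0,1\}^n$, then run $\schoening$ on $(F,\alpha,n)$. If $F$ is satisfiable, fix some satisfying assignment $\alpha^\star$. The key observation is that once $\alpha$ is random, the distance $d_H(\alpha,\alpha^\star)$ is a binomial random variable: each variable independently flips with probability $1/2$, so $\pr[d_H(\alpha,\alpha^\star)=r] = \binom{n}{r} 2^{-n}$. In particular, $\alpha^\star \in B_r(\alpha)$ exactly when $d_H(\alpha,\alpha^\star) \le r$.

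Conditional on $d_H(\alpha,\alpha^\star)=r$, the ball $B_r(\alpha)$ contains $\alpha^\star$, so Lemma~\ref{lemma-2-r-random} guarantees that $\schoening$ returns a satisfying assignment with probability at least $(k-1)^{-r}$. Summing over $r$ from $0$ to $n$ and using the binomial theorem,
\[
\sum_{r=0}^{n} \binom{n}{r} 2^{-n} (k-1)^{-r}
\;=\; 2^{-n}\left(1+\frac{1}{k-1}\right)^{n}
\;=\; \left(\frac{k}{2(k-1)}\right)^{n}.
\]
This is exactly the claimed lower bound on the success probability.

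For the running-time claim, I only need to observe that a single run of $\schoening$ on $(F,\alpha,n)$ performs $O(n)$ reassignment steps, each of which can be implemented in polynomial time (checking whether $F$ is satisfied, picking any unsatisfied clause, and flipping one of its variables). Sampling $\alpha$ uniformly at random also takes polynomial time. Hence the whole procedure runs in polynomial time, and the only nontrivial ingredient is the averaging argument above. No step presents a real obstacle; the only thing to be mildly careful about is that Lemma~\ref{lemma-2-r-random} is invoked with the right parameter $r$ for each outcome of $\alpha$, which is handled cleanly by conditioning on $d_H(\alpha,\alpha^\star)$ before applying the lemma.
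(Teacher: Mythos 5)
Your proposal is correct and follows essentially the same argument as the paper: choose $\alpha$ uniformly at random, condition on $d_H(\alpha,\alpha^\star)=r$, invoke Lemma~\ref{lemma-2-r-random} to get success probability at least $(k-1)^{-r}$, and sum $\sum_{r=0}^n \binom{n}{r}2^{-n}(k-1)^{-r}$ via the binomial theorem. The paper's proof is the same averaging computation, so there is nothing further to reconcile.
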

\begin{proof}
  Let $\alpha^*$ be a satisfying assignment of $F$ and let $\alpha$ be
  an assignment chosen uniformly at random from $\{0,1\}^n$.  For each
  $0 \leq r \leq n$, the probability that the Hamming distance
  $d_H(\alpha,\alpha^*)$ is $r$ is ${n \choose r} / 2^n$.  In this
  case, Sch\"oning's random walk returns a satisfying assignment with
  probability at least $(k-1)^{-r}$. The overall success probability
  thus is at least
  $$
  \sum_{r=0}^n {n \choose r} 2^{-n} (k-1)^{-r} 
  = \left(\frac{k}{2(k-1)}\right)^n \ ,
  $$
  and the running time is clearly polynomial.
\end{proof}
By repeating the above algorithm, one obtains a Monte-Carlo algorithm
for $k$-SAT of running time $O^*( (2(k-1)/k)^n)$.

\subsection*{Deterministic Algorithms}

What about deterministic algorithms? Dantsin et al.~\cite{dantsin}
give a simple recursive algorithm for $\PBkS{k}$ running in time
$O^*(k^r)$: If $\alpha$ satisfies $F$, we are done. Otherwise, if
$r=0$, we can return \texttt{failure}. If $r \geq 1$ and $\alpha$ does
not satisfy $F$, we let $C$ be an unsatisfied clause. There are at
most $k$ literals in $C$, thus there are at most $k$ possibilities to
locally change $\alpha$ so as to satisfy $C$. We recursively explore each
possibility, decreasing $r$ by $1$ (see Algorithm~\ref{sbs} for the details). 
The next achievement of Dantsin et al. is to show how a deterministic algorithm
for $\PBkS{k}$ can be turned into a deterministic algorithm for
$k$-SAT:
\begin{lemma}[Dantsin et al.~\cite{dantsin}]
  If algorithm A solves $\PBkS{k}$ in time $O^*(a^r)$, then there is
  an algorithm $B$ solving $k$-SAT in time
  $O^*\left(\left(\frac{2a}{a+1}\right)^n\right)$. Furthermore, $B$ is
  deterministic if $A$ is.
\label{ball-to-cube}
\end{lemma}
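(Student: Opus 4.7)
The plan is to reduce $k$-SAT to $\PBkS{k}$ by a covering-code argument: pick a radius $r$ to be chosen later, and cover the entire Hamming cube $\{0,1\}^n$ by balls of radius $r$. For every truth assignment $\beta$ (and in particular for any putative satisfying assignment $\alpha^*$), the cube will contain a center $\alpha$ in our cover with $d_H(\alpha,\alpha^*)\le r$, so invoking $A$ on $(F,\alpha,r)$ returns some satisfying assignment. Algorithm $B$ simply enumerates the centers $\alpha_1,\dots,\alpha_N$ of this cover and runs $A(F,\alpha_i,r)$ on each.

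The key ingredient is a covering code $\mathcal{C}\subseteq\{0,1\}^n$ of radius $r$ whose size is close to the sphere-covering lower bound $2^n/|B_r(\alpha)|$. I would construct $\mathcal{C}$ deterministically by partitioning the $n$ variables into blocks of constant size $b$, finding by brute force (in time $2^{O(b)}$) a near-optimal covering code of $\{0,1\}^b$ with radius $\lfloor\rho b\rfloor$, and taking the product of these per-block codes. As $b\to\infty$, the rate of the block code approaches the sphere-covering bound, so for any $\epsilon>0$ we can ensure $|\mathcal{C}|\le 2^n\binom{n}{r}^{-1}\poly(n)$ with only a multiplicative $(1+\epsilon)^n$ slack, which can be absorbed into the final bound.

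The analysis then boils down to the optimization
\[
|\mathcal{C}|\cdot O^*(a^r) \;=\; O^*\!\left(\frac{2^n}{\binom{n}{r}}\,a^r\right),
\]
and I would minimize this by setting $r=\lfloor n/(a+1)\rfloor$. A direct entropy calculation gives $1 - H(1/(a+1)) + \tfrac{1}{a+1}\log_2 a = \log_2\tfrac{2a}{a+1}$, so the product is $O^*\bigl((2a/(a+1))^n\bigr)$, as required. Determinism of $B$ is immediate since the covering code is constructed deterministically and $A$ is assumed deterministic (the randomized case works verbatim by linearity).

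The main obstacle is the covering-code construction: one needs both (i) polynomial overhead per codeword and (ii) size within a subexponential factor of the sphere-covering bound. Without (ii) the bound would be exponentially worse; without (i) the covering step itself would dominate. The block-product construction resolves both, but it only approaches the bound in the limit $b\to\infty$, which is why the running time we obtain has the form $O^*((2a/(a+1))^n)$ rather than exactly matching the sphere-covering optimum --- a slack the statement implicitly allows (and which in the application to Sch\"oning's algorithm will be absorbed into the $\epsilon$ in $(a_k+\epsilon)^n$).
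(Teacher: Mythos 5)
Your overall approach is exactly the one the paper sketches (and attributes to Dantsin et al.): cover $\{0,1\}^n$ by Hamming balls of radius $r$, call $A$ once per center, and optimize $r = n/(a+1)$; your entropy computation $1 - H(1/(a+1)) + \frac{1}{a+1}\log_2 a = \log_2\frac{2a}{a+1}$ is correct, and correctness of $B$ follows since some center is within distance $r$ of a satisfying assignment, which is precisely the promise $\PBkS{k}$ requires.

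The one genuine gap is in your covering-code construction. With blocks of \emph{constant} length $b$, even an optimal per-block code can exceed the sphere-covering bound by a factor $c_b$ that grows with $b$ (the probabilistic upper bound only gives $c_b = O(b)$), and raising this to the power $n/b$ yields an overhead of $c_b^{n/b} = 2^{\Theta((\log b)/b)\cdot n}$ --- exponential in $n$, not polynomial. So as written you prove only $O^*\bigl((\frac{2a}{a+1}+\epsilon)^n\bigr)$ for every $\epsilon>0$, not the stated $O^*\bigl((\frac{2a}{a+1})^n\bigr)$; your closing remark conflates these. The standard fix, which the paper itself uses in its proof of Lemma~\ref{lemma-2-boxes}, is to take a constant \emph{number} $b$ of blocks, each of length $n'=n/b$ (growing with $n$), establish existence of a code of size $O(n'\, 2^{n'}/\binom{n'}{\rho n'})$ on each block probabilistically, and then construct one deterministically via the greedy $O(\log N)$-approximation for \textsc{Set-Cover}; the resulting product code then has only $\poly(n)$ overhead, and the greedy algorithm's running time $\poly(2^{n'}) = 2^{O(n/b)}$ is absorbed by choosing $b$ large enough. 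For the application in this paper the distinction is immaterial, since Theorem~\ref{theorem-det-walk} already concedes an additive $\epsilon$, but the lemma as stated does claim the clean $O^*$ bound.
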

Their algorithm to prove the lemma constructs a so-called covering
code $\mathcal C \subseteq \{0,1\}^n$ with the property that every
assignment $\alpha \in \{0,1\}^n$ has a codeword $\gamma \in
\mathcal{C}$ at a suitably small Hamming distance from $\alpha$.
Sch{\"o}ning's randomized selection of an initial assignment is turned
deterministic by iterating through all codewords $\gamma \in
\mathcal{C}$ and solving $\PBkS{k}$ around each of them. Provided that
the formula is satisfiable, one choice of $\gamma \in \mathcal C$ will
be sufficiently close to a satisfying assignment for the subsequent
local search to succeed.

The recursive algorithm for $\PBkS{k}$ of Dantsin et al. has running
time $O^*(k^r)$.  Therefore Lemma~\ref{ball-to-cube} gives a running
time of $O^*( (2k/(k+1))^n)$. For $k=3$, clever branching rules have
been designed to improve upon the $O^*(3^r)$ bound, leading to the
respective improvements on deterministic running times mentioned in
the first paragraph of this paper.

\subsection{Our Contribution}

Our contribution is to give a deterministic algorithm solving
$\PBkS{k}$ in a running time that gets arbitrarily close to
that of the Monte-Carlo algorithm in Lemma~\ref{lemma-2-r-random}.
\begin{theorem}
  For every $\epsilon > 0$, there exists a deterministic algorithm
  which solves the problem $\PBkS{k}$ in time $O( (k-1+\epsilon)^r)$.
\label{theorem-det-walk}
\end{theorem}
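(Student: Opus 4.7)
The plan is to aggregate consecutive branching steps of the deterministic ball-search algorithm \sbs{} of Dantsin et al.\ into blocks of length $b=b(\epsilon)$, and to show that each block can be processed with effective branching factor $(k-1+\epsilon')^b$ rather than the naive $k^b$. The randomized bound $(k-1)^{-r}$ of Lemma~\ref{lemma-2-r-random} reflects the fact that, within each unsatisfied clause, only one of the $k$ literals is actually misaligned with a target $\alpha^*$; the deterministic algorithm of Dantsin et al.\ wastes a factor of $k/(k-1)$ per step by treating all $k$ literals symmetrically. Our goal is to recover this missing factor in the limit.

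Concretely, I would replace each atomic branch of \sbs{} by a super-step that looks $b$ levels into the recursion tree at once. The super-step enumerates a list of candidate new assignments $\beta$ at Hamming distance $\le b$ from the current assignment $\alpha$, chosen so that every length-$b$ walk of \schoening{} starting from $\alpha$ is represented by some $\beta$ in the list, together with the net progress $\Delta$ it achieves toward $\alpha^*$. The key combinatorial lemma to prove is that this list has size at most $(k-1+\epsilon')^b$: a length-$b$ walk with net progress $\Delta$ has $p$ ``correct'' and $q=p-\Delta$ ``wrong'' flips, contributing $\binom{b}{q}(k-1)^q$ essentially distinct realizations; summing over $\Delta$ and applying standard entropy estimates yields the bound once $b$ is sufficiently large. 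Iterating $r/b$ super-steps, with the radius decreased by $\Delta$ each time, gives total running time $(k-1+\epsilon')^r\le(k-1+\epsilon)^r$ after choosing $b$ in terms of $\epsilon$.

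The main obstacle is formalizing the super-step. One must decide which walks are ``essentially distinct'' so as to keep the candidate list small (flipping a variable and later unflipping it should collapse), while guaranteeing that every successful walk of the randomized algorithm is still captured by some enumerated $\beta$. I expect the auxiliary problem \DBS{} (suggested by the macros in the source) to serve as a clean recursion interface: each super-step reduces \PBkS{k} to several instances of \DBS{} involving two nearby ball centers, and \DBS{} in turn recurses back to \PBkS{k} on strictly smaller radii, so that the overall recursion closes. A secondary bookkeeping issue is that $r$ need not be divisible by $b$, handled by a naive $O^*(k^{r\bmod b})$ call at the leaves, which is absorbed into the constant once $b$ is fixed.
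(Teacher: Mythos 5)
Your high-level intuition --- amortize over blocks of $b$ correction steps and recover the factor $k/(k-1)$ per step in the limit --- is the same as the paper's, but the combinatorial core of your plan does not work, and the idea that makes it work is missing. The counting lemma you propose is insufficient: summing $\binom{b}{q}(k-1)^q$ over all $q$ gives exactly $k^b$ by the binomial theorem, and even if you restrict to the profitable regime $q\approx b/k$ (net progress $\Delta=b-2b/k$), the number of walks is about $\binom{b}{b/k}(k-1)^{b/k}\approx k^b/(k-1)^{\Delta}$, whereas the budget that closes the recursion is $(k-1)^{\Delta}$ per super-step (note also your accounting slip: if the radius drops by $\Delta<b$ per super-step, you perform $r/\Delta$ of them, so the per-step list must have size about $(k-1+\epsilon)^{\Delta}$, not $(k-1+\epsilon)^b$). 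The ratio $k^b/(k-1)^{2\Delta}$ is exponentially large (e.g.\ about $1.89^b$ for $k=3$), and no deduplication of walks by endpoint can absorb it, because the low-error walks really do have that many distinct endpoints. The missing idea is a \emph{covering code}: you do not need to enumerate every low-error choice, only a set $\mathcal{C}\subseteq\{1,\dots,k\}^t$ such that some element of $\mathcal{C}$ lies within Hamming distance $t/k$ of the unknown correct choice $w^*$; such a code has size about $k^t/\bigl(\binom{t}{t/k}(k-1)^{t/k}\bigr)\approx(k-1)^{t-2t/k}=(k-1)^{\Delta}$, i.e.\ smaller than the set of all low-error choices by precisely the ball volume, and this is exactly the budget.

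The second missing piece is what makes the covering code applicable at all. Your super-step is built from $b$ \emph{adaptive} steps of the walk (which clause is unsatisfied next depends on the previous flip), so the choice space is a tree, not the product space $\{1,\dots,k\}^t$ on which Hamming covering codes live; you flag this as ``the main obstacle'' but offer no resolution, and the \DBS{} macro you lean on is unused in the paper. The paper sidesteps adaptivity entirely: it greedily takes a \emph{maximal} family $G$ of pairwise disjoint unsatisfied $k$-clauses. If $|G|\ge t$ it fixes $t$ of them, so that flipping one literal in each is order-independent and the choices form $\{1,\dots,k\}^t$, then recurses on $\alpha[w]$ for each codeword $w\in\mathcal{C}$ with radius $r-\Delta$. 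If $|G|<t$ it brute-forces all $2^{k|G|}$ assignments to $\vbl(G)$ (a constant), after which maximality guarantees every unsatisfied clause has at most $k-1$ literals and the plain Dantsin et al.\ search already runs in $O^*((k-1)^r)$ by Proposition~\ref{prop-short-neg}. Without both the covering code and this disjoint-clause dichotomy, your super-step cannot be realized within the claimed bound.
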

Combining this theorem with Lemma~\ref{ball-to-cube} proves
our main theorem:
\begin{theorem}
  For every $\epsilon > 0$, there is a deterministic algorithm solving
  $k$-SAT in time $O^*\left( \left(\frac{2(k-1)}{k}+\epsilon\right)^n
  \right)$.
\label{main-theorem}
\end{theorem}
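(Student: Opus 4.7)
The plan is to derive Theorem~\ref{main-theorem} directly by composing Theorem~\ref{theorem-det-walk} with Lemma~\ref{ball-to-cube}, with one small auxiliary continuity argument to line up the $\epsilon$-parameters.

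First, I would fix the target $\epsilon > 0$ from the statement of Theorem~\ref{main-theorem}. Applying Theorem~\ref{theorem-det-walk} with some (yet unspecified) $\epsilon' > 0$ produces a deterministic algorithm $A$ solving $\PBkS{k}$ in time $O((k-1+\epsilon')^r)$. Then I would feed $A$ into Lemma~\ref{ball-to-cube} with $a := k-1+\epsilon'$. Since $A$ is deterministic, the lemma yields a deterministic algorithm $B$ for $k$-SAT with running time
\[
O^*\!\left( \left( \frac{2a}{a+1}\right)^n \right)
= O^*\!\left( \left( \frac{2(k-1+\epsilon')}{k+\epsilon'}\right)^n \right).
\]

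Second, I would choose $\epsilon'$ small enough so that this base is at most $\frac{2(k-1)}{k} + \epsilon$. This is purely a continuity/calculation step: the map $\epsilon' \mapsto \frac{2(k-1+\epsilon')}{k+\epsilon'}$ is continuous in $\epsilon'$ and tends to $\frac{2(k-1)}{k}$ as $\epsilon' \downarrow 0$, so some $\epsilon' = \epsilon'(\epsilon, k) > 0$ does the job. (One can even write a one-line estimate, e.g.\ bounding the derivative at $\epsilon'=0$ and picking $\epsilon'$ proportional to $\epsilon$.) Substituting this choice into the bound above yields exactly the claimed running time $O^*\!\left( \left( \frac{2(k-1)}{k} + \epsilon \right)^n \right)$.

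There is no real obstacle here; the hard work has already been done in Theorem~\ref{theorem-det-walk} (the deterministic local search) and in Lemma~\ref{ball-to-cube} (the covering code construction that turns a ball-search algorithm into a SAT algorithm). The only thing to be careful about is making the $\epsilon$ bookkeeping match up between the two results, which is handled by the continuity argument above.
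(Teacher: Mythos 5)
Your proposal is correct and follows exactly the paper's route: the paper proves Theorem~\ref{main-theorem} by the one-line observation that Theorem~\ref{theorem-det-walk} combined with Lemma~\ref{ball-to-cube} yields the result. Your explicit continuity argument for choosing $\epsilon'$ so that $\frac{2(k-1+\epsilon')}{k+\epsilon'} \leq \frac{2(k-1)}{k}+\epsilon$ just fills in the $\epsilon$-bookkeeping that the paper leaves implicit.
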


Before jumping into technical details, let us sketch the main idea of
our improvement for $k=3$.  Let $F$ be a $3$-CNF
formula and $\alpha$ some assignment. Suppose $F$ contains $t$
pairwise disjoint clauses $C_1,\dots,C_t$, all of which are
unsatisfied by $\alpha$. We let Sch\"oning's random walk algorithm
process these clauses one after the other: In each clause $C_i$, it
picks one literal randomly and satisfies it. Thus, of all $3^t$
possibilities to choose one literal in each $C_i$, it chooses one
uniformly at random. Let $\alpha^*$ be an assignment satisfying $F$.
With probability at least $3^{-t}$, Sch\"oning's random walk chooses
in each $C_i$ a literal that $\alpha^*$ satisfies. In this case, the
distance from $\alpha$ to $\alpha^*$ decreases by $t$. However, with
much bigger probability, roughly $2^{-t/3}$, the random walk chooses
the ``correct'' literal in $2t/3$ clauses $C_i$ and a ``wrong''
literal in the remaining $t/3$.  In this case, the distance from
$\alpha$ to $\alpha^*$ decreases by $t/3$.  This is the power of
Sch\"oning's algorithm: It hopes to make a modest progress of $t/3$,
which is much more likely than making a progress of $t$. Our key
observation is that this choice of Sch\"oning can be derandomized:
There is a set of (roughly) $2^{t/3}$ choices which literal to satisfy
in each $C_i$, such that at least one of them makes a progress of at
least $t/3$.

\section{The Algorithm}

To begin with, we will formally state the recursive algorithm
by Dantsin et al.~\cite{dantsin} solving $\PBkS{k}$ in time
$O^*(k^r)$.
\begin{algorithm}
\caption{\sbs(CNF formula $F$, assignment $\alpha$, radius $r$)}
\label{sbs}
\begin{algorithmic}[1]
  \IF{$\alpha$ satisfies $F$} \RETURN \texttt{true}
  \ELSIF{$r = 0$} \RETURN \texttt{false}
  \ELSE
  \STATE $C \leftarrow $ any clause of $F$ unsatisfied by $\alpha$
  \label{line-select-clause}
  \FOR{$u \in C$}
  \IF{$\sbs(F^{[u := 1]}, 
    \alpha, r-1) = \texttt{true}$} \RETURN \texttt{true}
  \ENDIF
  \ENDFOR
  \RETURN \texttt{false}
  \ENDIF
\end{algorithmic}
\end{algorithm}

\begin{proposition}
  Algorithm~\ref{sbs} solves $\BkS{k}$ in time $O^*(k^r)$
\end{proposition}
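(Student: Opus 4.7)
The plan is to handle correctness and running time separately, each by a short induction on the radius $r$.

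For correctness I would prove by induction on $r$ the following statement: if $B_r(\alpha)$ contains an assignment $\alpha^*$ satisfying $F$, then $\sbs(F,\alpha,r)$ returns $\texttt{true}$. The base case $r=0$ is immediate, since then $B_0(\alpha)=\{\alpha\}$ forces $\alpha=\alpha^*$, so the first branch of the algorithm fires. For the inductive step with $r\ge 1$, if $\alpha$ already satisfies $F$ we are again done; otherwise $\alpha\neq\alpha^*$, and the clause $C$ picked on line~\ref{line-select-clause} is unsatisfied by $\alpha$ but satisfied by $\alpha^*$. The crux is to choose a literal $u\in C$ satisfied by $\alpha^*$ and observe that, because $\alpha$ fails to satisfy $u$, the variable $\vbl(u)$ is a coordinate on which $\alpha$ and $\alpha^*$ disagree. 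Consequently, $\alpha^*$ restricted to the variables surviving in $F^{[u:=1]}$ still satisfies that formula and lies within Hamming distance $r-1$ of the corresponding restriction of $\alpha$. The inductive hypothesis then guarantees that the recursive call on this particular literal returns $\texttt{true}$, so the \textbf{for} loop does too.

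For the running time I would introduce $T(r)$ for the worst-case time on inputs of radius $r$ and observe the recurrence $T(r)\le k\,T(r-1)+\poly(n)$, where the $\poly(n)$ term absorbs checking whether $\alpha$ satisfies $F$, selecting an unsatisfied clause, and constructing each restricted formula $F^{[u:=1]}$. Unrolling this yields $T(r)=O^*(k^r)$, as claimed.

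There is no serious obstacle; the one subtlety is that $\alpha$ is never actually flipped. All the progress is recorded by shrinking the formula via $F^{[u:=1]}$ and by decrementing $r$. Verifying that the effective distance from $\alpha$ to $\alpha^*$ (viewed on the surviving variables) really drops by one at each level comes down to the single observation $\alpha(\vbl(u))\neq\alpha^*(\vbl(u))$, which is forced by the fact that $u$ is satisfied by $\alpha^*$ but not by $\alpha$.
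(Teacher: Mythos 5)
Your proof is correct and follows essentially the same route as the paper: induction on $r$, picking a literal $u$ of the unsatisfied clause $C$ that $\alpha^*$ satisfies, and observing that the distance budget drops by one while $F^{[u:=1]}$ remains satisfiable nearby. The only cosmetic difference is that you restrict $\alpha^*$ to the surviving variables, whereas the paper instead flips $\alpha^*$ at $\vbl(u)$ to obtain an assignment $\alpha'$ with $d_H(\alpha,\alpha')\le r-1$ satisfying $F^{[u:=1]}$; both formulations carry the induction through.
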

\begin{proof}
  The running time is easy to analyze: If $F$ is a $(\le k)$-CNF formula,
  then each call to $\sbs$ causes at most $k$ recursive calls. To see
  correctness of the algorithm, we proceed by induction on $r$ and 
  suppose that $\alpha^*$ satisfies $F$ and $d_H(\alpha,\alpha^*) \leq r$. 
  Let $C$ be the clause selected in
  line~\ref{line-select-clause}.  Since $\alpha^*$ satisfies $C$ but
  $\alpha$ does not, there is at least one literal $u \in C$ such that
  $\alpha^*(u) = 1$ and $\alpha(u)=0$. Let $\alpha' := \alpha^*[u :=
  0]$.  We observe that $d(\alpha,\alpha') \leq r-1$ and $\alpha'$
  satisfies $F^{[u:=1]}$ (although not necessarily $F$). Therefore the
  induction hypothesis ensures that the recursive call to 
  $\sbs(F^{[u:=1]},\alpha,r-1)$ return \texttt{true}.
\end{proof}  
\begin{proposition}
  Suppose $F$ is a $(\le k)$-CNF formula, $\alpha$ a truth assignment to its
  variables, and $r \in \mathbb{N}$. If every clause in $F$ that is
  unsatisfied by $\alpha$ has size at most $k-1$, then
  $\sbs(F,\alpha,r)$ runs in time $O^*( (k-1)^r)$.
\label{prop-short-neg}
\end{proposition}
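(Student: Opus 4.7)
The plan is to prove the bound by showing that the branching factor of the recursion is at most $k-1$, rather than $k$, throughout the entire recursion tree. The naive analysis of $\sbs$ gives a branching factor of $k$ because the clause $C$ chosen on line~\ref{line-select-clause} could have size $k$. Under the hypothesis, the initial call has $|C| \leq k-1$, but we must propagate this property through the recursive calls.

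Concretely, I would proceed by induction on $r$ and carry along the invariant that every clause in the current formula that is unsatisfied by the current assignment $\alpha$ has size at most $k-1$. Given this invariant, each call branches into at most $k-1$ recursive calls, each with radius $r-1$, yielding the running time $O^*((k-1)^r)$. The base cases ($\alpha$ satisfies $F$, or $r=0$) are trivial.

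The key step is therefore to verify that the invariant is preserved under the assignment $u := 1$ for $u \in C$. Note that because $C$ is unsatisfied by $\alpha$, we have $\alpha(u) = 0$, hence $\alpha$ satisfies the literal $\bar u$. Now take any clause $C'$ of $F^{[u := 1]}$ that is unsatisfied by $\alpha$. This $C'$ arises from a clause $C''$ of $F$ not containing $u$ (otherwise $C''$ would have been removed). There are two cases. If $\bar u \notin C''$, then $C' = C''$ and $C''$ is unsatisfied by $\alpha$ in $F$, so by the invariant on $F$ we get $|C'| = |C''| \leq k-1$. If $\bar u \in C''$, then $C''$ was satisfied by $\alpha$ via $\bar u$, and $C'$ is obtained from $C''$ by removing $\bar u$; since $|C''| \leq k$ initially, we get $|C'| \leq k-1$. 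In either case the invariant holds in $F^{[u:=1]}$.

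The main conceptual obstacle is spotting that even though clauses satisfied by $\alpha$ in $F$ (namely those containing $\bar u$) may become unsatisfied after the substitution, such clauses are automatically shortened by one because we remove the literal $\bar u$ from them. This is precisely what keeps the invariant alive. Once this is observed, the running time analysis amounts to unrolling the recursion: depth $r$, branching at most $k-1$, and polynomial work per node.
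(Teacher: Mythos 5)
Your proof is correct and follows the same route as the paper: the paper's argument rests on exactly the key observation you identify, namely that the property "every clause unsatisfied by $\alpha$ has at most $k-1$ literals" is preserved under passing to $F^{[u:=1]}$, so every node of the recursion tree has branching factor at most $k-1$. You simply spell out the case analysis (clauses containing $\bar u$ get shortened, all other surviving unsatisfied clauses were already short) that the paper leaves implicit.
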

\begin{proof}
  The key observation is that if all clauses in $F$ that are not
  satisfied by $\alpha$ have at most $k-1$ literals, then the same is
  true for any formula of the form $F^{[u := 1]}$. Therefore,
  any call to $\sbs$ entails at most $k-1$ recursive calls.
\end{proof}

\subsection{$k$-ary Covering Codes}

Before explaining our algorithm, we make a combinatorial detour to
$k$-ary covering codes, which will play a crucial role
in our algorithm.\\

The set $\{1,\dots,k\}^t$ looks similar to the Boolean cube
$\{0,1\}^t$ in many ways. For example, it is endowed with a Hamming
distance $d_H$: For two elements $w,w'\in \{1,\dots,k\}^t$, we define
$d_H(w,w')$ to be the number of coordinates in which $w$ and $w'$ do
not agree. There are also balls: We define
$$
B^{(k)}_r(w) := \{w' \in \{1,\dots,k\}^t \ | \ d_H(w,w') \leq r\} \ .
$$
What is the volume of such a ball? Well, there are ${t \choose r}$
possibilities to choose the set of coordinates in which $w$ and
$w'$ are supposed to differ, and for each such coordinate,
there are $k-1$ ways in which they can differ. Therefore,
$$
\vol{k}{t}{r} := |\ball{k}{r}{w}| = {t \choose r} (k-1)^r \ .
$$
We are interested in the question how many balls $\ball{k}{r}{w}$ we
need to cover all of $\{1,\dots,k\}^t$. Note that by symmetry, $w \in
\ball{k}{r}{v}$ iff $v \in \ball{k}{r}{w}$ for any $v,w \in
\{1,\dots,k\}^t$.
\begin{definition}
  Let $t \in \mathbb{N}$. A set $\mathcal{C} \subseteq
  \{1,\dots,k\}^t$ is called a {\em code of covering radius} $r$ if  
  $$
  \bigcup_{w \in \mathcal{C}} \ball{k}{r}{w} = \{1,\dots,k\}^t \ . 
  $$
  In other words, for each $w' \in \{1,\dots,k\}^n$, there is some $w
  \in \mathcal{C}$ such that $d_H(w,w') \leq r$.
\end{definition}
The following lemma is an adaptation of a lemma by Dantsin et
al.~\cite{dantsin}, only for $\{1,\dots, k\}^t$ instead of 
the Boolean cube $\{0,1\}^t$.
\begin{lemma}
  For any $t,k \in \mathbb{N}$ and $0 \leq r \leq t$, there exists
  a code $\mathcal{C} \subseteq \{1,\dots,k\}^t$ of covering
  radius $r$ such that
  $$
  |\mathcal{C}| \leq\left\lceil \frac{t \ln (k) k^t}{ {t \choose r} (k-1)^r}
    \right\rceil
  $$
\label{lemma-size-of-code}
\end{lemma}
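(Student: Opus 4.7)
The plan is to prove this by the standard greedy/probabilistic covering argument, essentially the one due to Lov\'asz for set cover applied to balls in $\{1,\dots,k\}^t$. Let $V := \vol{k}{t}{r} = \binom{t}{r}(k-1)^r$ denote the common volume of every radius-$r$ ball, and let $N := k^t$ be the size of the ambient space.

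First I would build $\mathcal{C}$ iteratively. Let $U_0 := \{1,\dots,k\}^t$, and having chosen $w_1,\dots,w_i$, let $U_i$ be the set of points still not covered by $\ball{k}{r}{w_1}\cup\dots\cup\ball{k}{r}{w_i}$. The key averaging step is to observe that
\[
\sum_{w \in \{1,\dots,k\}^t} |\ball{k}{r}{w} \cap U_i| \;=\; \sum_{u \in U_i} |\ball{k}{r}{u}| \;=\; V\,|U_i|,
\]
using the symmetry $u \in \ball{k}{r}{w} \iff w \in \ball{k}{r}{u}$. Hence some center $w_{i+1}$ covers at least a $V/N$ fraction of $U_i$, yielding $|U_{i+1}| \le |U_i|(1 - V/N)$.

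Iterating gives $|U_m| \le N\,(1 - V/N)^m < N\, e^{-mV/N}$ (the strict inequality holding because $0 < V/N < 1$ whenever $k \ge 2$; the trivial cases $k=1$ or $r=t$ can be dispatched separately since a single codeword suffices). Choosing $m := \lceil t \ln(k) N / V\rceil = \lceil t\ln(k) k^t / (\binom{t}{r}(k-1)^r)\rceil$, we have $mV/N \ge t\ln k = \ln N$, so $N e^{-mV/N} \le 1$, hence $|U_m| < 1$. Since $|U_m|$ is a non-negative integer, $|U_m| = 0$, and $\mathcal{C} := \{w_1,\dots,w_m\}$ is a covering code of radius $r$ of the required size.

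I do not expect a genuine obstacle here; the only slightly delicate point is arranging the strict inequality $|U_m|<1$ so that one can conclude $|U_m|=0$ with a code of size exactly $\lceil t\ln(k) k^t/V\rceil$ rather than one more. Using $(1-x)^m < e^{-xm}$ for $0<x<1$ takes care of this cleanly, and the degenerate corners ($k=1$, $r=t$, or $r=0$) are handled by inspection.
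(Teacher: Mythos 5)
Your proof is correct, but it takes a genuinely different route from the paper's. The paper argues probabilistically: it samples $m = \bigl\lceil t\ln(k)k^t/\bigl(\tbinom{t}{r}(k-1)^r\bigr)\bigr\rceil$ codewords uniformly and independently, bounds the probability that a fixed $w'$ is uncovered by $(1-\vol{k}{t}{r}/k^t)^m < k^{-t}$, and finishes with a union bound over all $k^t$ points. You instead run the greedy covering argument: by the symmetry $u \in \ball{k}{r}{w} \iff w \in \ball{k}{r}{u}$, an averaging step shows some center always covers a $\vol{k}{t}{r}/k^t$ fraction of the as-yet-uncovered set, and iterating $m$ times drives the uncovered count below $1$. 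Both yield exactly the same bound via the same inequality $(1-x)^m < e^{-xm}$, and your handling of the strict inequality and the degenerate corners is fine. The trade-off: the paper's argument is a line or two shorter, while yours is constructive --- it is essentially the greedy \textsc{Set-Cover} approximation that the paper itself later invokes (citing Hochbaum) in the proof of Lemma~\ref{lemma-2-boxes}, so your approach would unify the existence proof with an explicit construction. That constructivity is not needed here, since the paper takes $t$ to be a constant and brute-forces an optimal code anyway, but it is a legitimate and arguably more self-contained alternative.
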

\begin{proof}
  The proof is probabilistic.  Let $m := \lceil (t \ln (k) k^t)/ ({t
    \choose r} (k-1)^r)\rceil$ and build $\mathcal{C}$ by sampling $m$
  points from $\{1,\dots,k\}$, uniformly at random and independently.
  Fix an element $w' \in \{1,\dots,k\}^t$. We calculate
  $$
  {\rm Pr}[w' \not \in \bigcup_{w \in \mathcal{C}} 
  \ball{k}{r}{w}] = 
  \left(1 - \frac{\vol{k}{t}{r}}{k^t}\right)^{|\mathcal{C}|} 
  < e^{- |\mathcal{C}|\vol{k}{t}{r} / k^t} \leq e^{-t\ln(k)} = k^{-t} \ .
  $$
  By the union bound, the probability that there is any $w' \not \in
  \bigcup_{w \in \mathcal{C}} \ball{k}{r}{w}$ is at most $k^t$ times the above
  expression, and thus smaller than $1$.  Therefore, with positive
  probability, $\mathcal{C}$ is a code of covering radius $r$.
\end{proof}

\subsection{A Deterministic Algorithm for $\PBkS{k}$}

We will now describe our deterministic algorithm. First it chooses a
sufficiently large constant $t$, depending on the $\epsilon$ in
Theorem~\ref{theorem-det-walk}, and computes a code $\mathcal{C}
\subseteq \{1,\dots,k\}^t$ of covering radius $t/k$. Since $k$ and $t$
are constants, it can afford to compute an optimal such code. We
estimate its size using Lemma~\ref{lemma-size-of-code} and the
following approximation of the binomial coefficient:
\begin{proposition}[MacWilliams, Sloane~\cite{MacWilliamsSloane77},
  Chapter 10, Corollary 9]
  For $0 \leq \rho \leq 1/2$ and $t \in \mathbb{N}$, it holds that
  $$
  {t \choose \rho t} \geq \frac{1}{\sqrt{8t\rho(1-\rho)}} 
  \left(\frac{1}{\rho}\right)^{\rho t} 
  \left(\frac{1}{1-\rho}\right)^{(1-\rho)t} 
  $$
\end{proposition}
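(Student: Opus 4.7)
The plan is to apply Stirling's formula in the quantitative form due to Robbins: for every integer $n \geq 1$,
$$\sqrt{2\pi n}\,(n/e)^n\,e^{1/(12n+1)}\ <\ n!\ <\ \sqrt{2\pi n}\,(n/e)^n\,e^{1/(12n)}.$$
I would substitute the lower bound for $t!$ in the numerator of $\binom{t}{\rho t} = t!\big/\bigl((\rho t)!\,((1-\rho)t)!\bigr)$ and the upper bounds for the two factorials in the denominator. The $e^{-t}$ terms cancel exactly, the powers of $t$ combine so that $t^t$ in the numerator absorbs the $t^{\rho t}\,t^{(1-\rho)t}$ from the denominator, and the three square-root prefactors collapse into $1/\sqrt{2\pi t\rho(1-\rho)}$. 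What remains is
$$\binom{t}{\rho t}\ >\ \frac{\exp\!\bigl(\tfrac{1}{12t+1} - \tfrac{1}{12\rho t} - \tfrac{1}{12(1-\rho)t}\bigr)}{\sqrt{2\pi t\rho(1-\rho)}}\,\left(\tfrac{1}{\rho}\right)^{\rho t}\!\left(\tfrac{1}{1-\rho}\right)^{(1-\rho)t}.$$

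The proposition weakens $\sqrt{2\pi}$ to $\sqrt{8}$, providing multiplicative slack $\sqrt{8/(2\pi)} = \sqrt{4/\pi}\approx 1.128$. The remaining task is to show this slack dominates the negative Stirling correction. Because $\rho t$ and $(1-\rho)t$ must both be positive integers for the binomial coefficient to be nontrivial, I would split according to $\min\{\rho t,(1-\rho)t\}$: if this minimum is at least $2$, the correction is bounded below by $e^{-1/12}\approx 0.920$; if the minimum equals $1$ but $t \geq 3$, it is bounded below by $e^{-1/12-1/24}\approx 0.880$. In both ranges, multiplying by $1.128$ exceeds $1$, so Stirling alone suffices.

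The main obstacle, and the interesting point, is the single case $t=2$, $\rho=1/2$: here the Stirling-derived inequality falls short by about one per cent. I would close this gap with a direct calculation, observing that $\binom{2}{1} = 2$ and the right-hand side of the proposition evaluates to $\frac{1}{\sqrt{8\cdot 2\cdot 1/4}}\cdot 2\cdot 2 = 2$. The equality is no accident: the constant $\sqrt{8}$ in the statement is precisely calibrated to make the inequality tight at this smallest nontrivial instance, which is why any purely asymptotic argument must be supplemented by a base-case check.
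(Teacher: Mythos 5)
The paper offers no proof of this proposition at all --- it is quoted verbatim from MacWilliams and Sloane --- so your Robbins-type Stirling argument is not competing with anything in the text, and its overall architecture (three quantitative Stirling substitutions, exact cancellation of the exponential and power factors, a $\sqrt{4/\pi}$ slack against the correction term, plus a base case) is the standard and essentially correct way to prove this bound. Your observation that $t=2$, $\rho=1/2$ gives equality, so that $\sqrt{8}$ is the best possible constant and a separate check of that instance is unavoidable, is accurate: $\binom{2}{1}=2$ and the right-hand side is $\tfrac{1}{\sqrt{2}}\cdot 2\cdot 2\cdot\tfrac{1}{\sqrt{2}}=2$.

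There is, however, one step that fails as written: the case $\min\{\rho t,(1-\rho)t\}=1$ with $t\geq 3$. You lower-bound the Stirling correction by $e^{-1/12-1/24}=e^{-1/8}\approx 0.8825$ and claim that multiplying by $\sqrt{4/\pi}\approx 1.1284$ exceeds $1$; in fact $0.8825\times 1.1284\approx 0.996<1$, so this case is not closed by the numbers you give. (Equivalently, you need $\tfrac{1}{12}+\tfrac{1}{24}\leq\tfrac{1}{2}\ln(4/\pi)$, i.e.\ $0.125\leq 0.1208$, which is false.) The repair is to keep the positive term $e^{1/(12t+1)}$ that you discarded: for $\rho t=1$ and $t\geq 3$ the full exponent is $\tfrac{1}{12t+1}-\tfrac{1}{12}-\tfrac{1}{12(t-1)}\geq -0.098$ (the worst case being $t=3$), and $e^{-0.098}\times\sqrt{4/\pi}\approx 1.02>1$, so the argument goes through once that factor is retained. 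With that correction, and with the implicit standing assumptions made explicit (namely $\rho t$ a positive integer, in particular $\rho>0$, since the stated right-hand side degenerates at $\rho=0$), your proof is complete.
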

We apply this bound with $\rho = 1/k$: 
$$
{t \choose t/k} \geq \frac{1}{\sqrt{8t}} 
k^{t/k}\left(\frac{k}{k-1}\right)^{(k-1)t/k}
 = \frac{k^t}{\sqrt{8t} (k-1)^{(k-1)t/k}} \ .
$$
Together with Lemma~\ref{lemma-size-of-code}, we obtain,
for $t$ sufficiently large:
$$
\mathcal{C} \leq \left\lceil \frac{t \ln(k) k^t}{{t \choose {t/k}}(k-1)^{t/k}}
  \right\rceil 
\leq \frac{t^2 k^t (k-1)^{(k-1)t/k}}{k^t (k-1)^{t/k}}
\leq t^2 (k-1)^{t - 2t/k} \ .
$$
The algorithm computes this constant-size code and stores it for
further use. It then calls a recursive procedure that does the real
stuff. That procedure first greedily constructs a maximal set $G$ of
pairwise disjoint unsatisfied $k$-clauses of $F$. That is, $G =
\{C_1,C_2,\dots, C_m\}$, the $C_i$ are pairwise disjoint, each $C_i$
in $G$ is unsatisfied by $\alpha$, and each unsatisfied $k$-clause $D$
in $F$ shares at least one literal with some $C_i$.\\

At this point, the algorithm considers two cases. First, if $m < t$,
it enumerates all $2^{km}$ truth assignments to the variables in $G$.
For each such assignment $\beta$, it calls $\sbs(F^{[\beta]}, \alpha,
r)$ and returns $\texttt{true}$ if at least one such call returns
$\texttt{true}$. Correctness is easy to see: At least one $\beta$
agrees with the promised assignment $\alpha^*$, and therefore
$\alpha^*$ still satisfies $F^{[\beta]}$. To analyze the running time,
observe that for any such $\beta$, the formula $F^{[\beta]}$ contains
no unsatisfied clause of size $k$. This follows from the maximality of
$G$. Therefore, Proposition~\ref{prop-short-neg} tells us that
$\sbs(F^{[\beta]},\alpha,r)$ runs in time $O^*((k-1)^r)$, and
therefore this case takes time $2^{km} O^*((k-1)^r)$. Since $m < t$, and
$t$ is a constant,
this is $O^*((k-1)^r)$.\\

The second case is more interesting: If $m \geq t$, the algorithm
chooses $t$ clauses from $G$ to form $H = \{C_1,\dots,C_t\}$, a set of
pairwise disjoint $k$-clauses, all unsatisfied by $\alpha$. At this
point, our code will come into play, but first we introduce some
notation: For $w \in \{1,\dots,k\}^t$, let $\alpha[w]$ be the assignment
obtained from $\alpha$ by flipping the value of the $w_i$\textsuperscript{th}
literal in $C_i$, for $1\leq i \leq t$. To do this, the algorithm has
to choose a fixed but arbitrary ordering on $H$ as well as on the
literals in each $C_i$. Note that $\alpha[w]$ satisfies exactly one
literal in each $C_i$, for $1 \leq i \leq t$.  Strictly speaking
$\alpha[w]$ depends not only on $w$, but also on $H$, so we should
write $\alpha[H,w]$ instead of $\alpha[w]$. However, as long as $H$ is
understood, we write $\alpha[w]$.\\

Let us give an example. Suppose $\alpha$ is the all-$0$-assignment,
$t=3$ and $H = \{(x_1 \vee y_1 \vee z_1), (x_2 \vee y_2 \vee z_2),
(x_3 \vee y_3 \vee z_3)\}$. Let $w = (2,3,3)$. Then $\alpha[w]$ is the
assignment that sets $y_1$, $z_2$, and $z_3$ to $1$ and all other
variables to $0$.\\

Consider now the promised satisfying assignment $\alpha^*$ with
$d_H(\alpha,\alpha^*) \leq r$. We define $w^* \in \{1,\dots,k\}^t$ as
follows: For each $1 \leq i \leq t$, we set $w^*_i$ to $j$ such
that $\alpha^*$ satisfies the $j$\textsuperscript{th} literal in
$C_i$. Since $\alpha^*$ satisfies at least one literal in each $C_i$,
we can do this, but since $\alpha^*$ possibly satisfies multiple
literals in $C_i$, the choice of $w^*$ is not unique. Note that in any
case
$d(\alpha[w^*], \alpha^*) = d(\alpha,\alpha^*) - t \leq r-t$.\\

We could now iterate over all $w \in \{1,\dots,k\}^t$ and call $\sbs(F,
\alpha[w], r-t)$. This would essentially be what $\sbs$ does and would
yield a running time of $O^*(k^r)$, i.e., no improvement over Dantsin
et al. Therefore, we do not do this. Instead, we let our code
$\mathcal{C}$ play its crucial role: Rather than recursing on
$\alpha[w]$ for each $w \in \{1,\dots,k\}^t$, we recurse only for each
$w \in \mathcal{C}$.  By the properties of $\mathcal{C}$, there is
some $w' \in \mathcal{C}$ such that $d_H(w',w^*) = t/k$. Observe what
happens when we go from $\alpha$ to $\alpha[w']$: For at most $t/k$
coordinates $i$, we have $w'_i \ne w^*_i$.  For those coordinates, switching
the $w'_i$\textsuperscript{th} literal of $C_i$ in the assignment
$\alpha$ {\em increases} the distance to $\alpha^*$.  On the other
hand, there are at least $t - t/k$ coordinates $i$ where $w'_i = w^*_i$,
and switching the $w'_i$\textsuperscript{th} literal of $C_i$ for such
an $i$ {\em decreases} the distance to $\alpha^*$. We conclude that
the distance increases at most $t/k$ times and decreases at least
$t-t/k$ times.  Therefore
$$
d_H(\alpha[w'], \alpha^*) \leq d_H(\alpha,\alpha^*) + t/k - (t - t/k) \leq
r - (t - 2t/k) . 
$$
Writing $\Delta := (t-2t/k)$, the procedure calls itself recursively
with $\alpha[w]$ and $r - \Delta$ for each $w \in \mathcal{C}$ and at
least one call will be successful. Let us analyze the running time: We
cause $|\mathcal{C}|$ recursive calls and decrease the complexity
parameter $r$ by $\Delta$ in each step. This is good, since
$|\mathcal{C}|$ is only slightly bigger than $(k-1)^{\Delta}$. We
conclude that the number of leaves in this recursion tree is at most
$$
|\mathcal{C}|^{r / \Delta} \leq (t^2 (k-1)^{\Delta})^{r/\Delta}
= \left( (k-1) t^{2 / \Delta} \right)^r \ .
$$
Since $t^{2 / \Delta}$ goes to $1$ as $t$ grows, the above term is,
for sufficiently large $t$, bounded by $(k-1+\epsilon)^r$. This proves
Theorem~\ref{theorem-det-walk}.
We summarize the whole procedure in Algorithm~\ref{sbf}.
\begin{algorithm}
  \caption{\sbf($k \in \mathbb{N}$, $(\le k)$-CNF formula $F$, assignment
    $\alpha$, radius $r$, code $\mathcal{C} \subseteq
    \{1,\dots,k\}^t)$}
\label{sbf}
\begin{algorithmic}[1]
  \IF{$\alpha$ satisfies $F$}
  \RETURN \texttt{true}
  \ELSIF{$r = 0$} \RETURN \texttt{false}
  \ELSE
  \STATE $G \leftarrow $ a maximal set of pairwise disjoint
  $k$-clauses of $F$ unsatisfied by $\alpha$
  \IF{$|G| < t$}
  \FOR{each assignment $\beta$ to the variables in $G$}
  \IF{$\sbs(F^{[\beta]},\alpha,r) = \texttt{true}$}
  \RETURN \texttt{true}
  \ENDIF
  \ENDFOR 
  \ELSE 
  \STATE $H \leftarrow \{C_1,\dots,C_t\} \subseteq G$
  \FOR{$w \in \mathcal{C}$}
  \IF{$\sbf(F, \alpha[H,w], r- (t-2t/k)) = \texttt{true}, \mathcal{C}$}
  \RETURN \texttt{true}
  \ENDIF
  \ENDFOR
  \ENDIF
  \ENDIF
  \RETURN \texttt{false}
\end{algorithmic}
\end{algorithm}

\section{Constraint Satisfaction Problems}

Constraint Satisfaction Problems, short CSPs, are generalizations of
SAT, allowing more than two truth values. Formally, suppose there is a
set $V = \{x_1,\dots,x_n\}$ of $n$ variables, each of which can take
on a value in $\{1,\dots,d\}$. A {\em literal} is an expression
of the form $(x_i \ne c)$ for $c \in \{1,\dots,d\}$. A constraint
is a disjunction of literals, for example
$$
(x_1 \ne 7 \vee x_2 \ne 5 \vee x_3 \ne d) \ .
$$
A CSP formula finally is a conjunction of constraints. We call it a
$(d,\le k)$-CSP formula if its variables can take $d$ values and each
constraint has at most $k$ literals. An assignment $\alpha$ to the
variables $V$ is a function $\alpha: V \rightarrow \{1,\dots,d\}$ and
can be represented as an element from $\{1,\dots,d\}^n$. We say
$\alpha$ satisfies the literal $(x_i \ne c)$ if, well, $\alpha(x_i)
\ne c$.  It satisfies a constraint if it satisfies at least one
literal in it, and it satisfies a CSP formula if it satisfies each
constraint in it. Finally, $(d,\le k)$-CSP is the problem of deciding
whether a given $(d,\le k)$-CSP formula has a satisfying assignment.  Note
that $(2,k)$-CSP is the same as $k$-SAT. Also, $(d, \le k)$-CSP is
NP-complete except the following three cases: (i) $d=1$, (ii) $k=1$,
(iii) $d=k=2$. Cases (i) and (ii) are trivial problems, and (iii) is
$2$-SAT, which is solvable in polynomial time (well-known, not
difficult to show, but still not trivial).\\

For the cases where $(d,\le k)$-CSP is NP-complete, what can we do?
Iterating through all $d^n$ assignments constitutes an algorithm
solving $(d,k)$-CSP in time $O^*(d^n)$. Sch\"oning's
algorithm~\cite{Schoening99} is much faster:

\begin{theorem}[Sch\"oning~\cite{Schoening99}]
  There is a randomized Monte-Carlo algorithm solving
  $(d,\le k)$-CSP in time 
  $$
  O^*\left( \left(\frac{d(k-1)}{k}\right)^n \right) \ .
  $$
\label{theorem-sch-csp}
\end{theorem}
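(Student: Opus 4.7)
The plan is to mimic the proof of Theorem~\ref{theorem-schoening}, substituting the appropriate $d$-ary quantities throughout. First I describe the \emph{CSP random walk}: starting from an assignment $\alpha \in \{1,\dots,d\}^n$, repeat for $O(n)$ iterations -- if $\alpha$ already satisfies $F$, return $\alpha$; otherwise pick any unsatisfied constraint $C$, pick a literal $(x_i \ne c_i) \in C$ uniformly at random, and reassign $\alpha(x_i)$ to a value chosen uniformly at random from $\{1,\dots,d\} \setminus \{c_i\}$, thereby satisfying the selected literal. The overall algorithm draws the initial $\alpha$ uniformly at random from $\{1,\dots,d\}^n$, runs the walk, and repeats the whole process $O^*\bigl((d(k-1)/k)^n\bigr)$ times.

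The heart of the proof is the CSP analogue of Lemma~\ref{lemma-2-r-random}: if $\alpha^*$ satisfies $F$ and $d_H(\alpha,\alpha^*) = r$, a single run returns a satisfying assignment with probability at least $((k-1)(d-1))^{-r}/\poly(n)$. To prove this I track $d_H(\alpha,\alpha^*)$ across the walk. The selected unsatisfied constraint $C$ must contain at least one \emph{good} literal $(x_i \ne c_i)$ with $\alpha^*(x_i) \ne c_i$. With probability $\ge 1/k$ the walk picks a good literal (since $|C| \le k$), and then with conditional probability $1/(d-1)$ the new value coincides with $\alpha^*(x_i)$, decreasing the Hamming distance by one. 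A bad literal increases the distance by one, while a good literal with a wrong new value leaves it unchanged. Consequently $d_H(\alpha,\alpha^*)$ evolves as a walk on $\{0,1,\dots,n\}$ with downward step probability at least $b := 1/(k(d-1))$ and upward step probability at most $a := (k-1)/k$.

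The main obstacle is bounding the hitting time of this three-outcome biased walk, since Sch\"oning's original argument assumes each step is a strict $\pm 1$ move. I would restrict attention to the subsequence of non-stay moves, which forms a pure $\pm 1$ walk whose down/up ratio is at least $b/a = 1/((k-1)(d-1))$. A standard gambler's-ruin-style estimate then lower-bounds the probability of reaching $0$ by $(b/a)^r = 1/((k-1)(d-1))^r$, while a Chernoff bound ensures that $T = O(n)$ total iterations contain at least $r$ non-stay moves with probability $1 - o(1)$; combining the two gives the bound claimed in the previous paragraph.

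Finally, averaging over the uniform random start -- where $\Pr[d_H(\alpha,\alpha^*) = r] = \binom{n}{r}(d-1)^r/d^n$ -- yields
\[
\sum_{r=0}^n \binom{n}{r}\frac{(d-1)^r}{d^n}\cdot\frac{1}{((k-1)(d-1))^r}
= \frac{1}{d^n}\left(1+\frac{1}{k-1}\right)^n = \left(\frac{k}{d(k-1)}\right)^n,
\]
up to the polynomial loss absorbed into $O^*$. Hence $O^*\bigl((d(k-1)/k)^n\bigr)$ independent repetitions suffice to boost the success probability to a constant, which proves the theorem.
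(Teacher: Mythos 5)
Your proof is correct in substance, but it takes a genuinely different route from the paper. You derandomize nothing here, of course, but you re-derive the theorem by generalizing the random walk itself to domain size $d$ (pick a random literal of an unsatisfied constraint, then a random new value among the $d-1$ admissible ones) and redoing the hitting-time analysis; this is exactly the generalization that Sch\"oning carries out in his original paper, which the present paper mentions but deliberately does not reprove. The paper instead uses a black-box reduction: restrict each variable's domain independently to a uniformly random $2$-element subset, note that a fixed satisfying assignment survives with probability $(2/d)^n$, and then run Boolean \schoening{} on the resulting $(\le k)$-CNF formula, giving success probability $(2/d)^n \cdot (k/(2(k-1)))^n = (k/(d(k-1)))^n$. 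The reduction is shorter, avoids the main technical wrinkle in your argument --- the three-outcome (down/stay/up) walk --- and, more importantly, is the object the paper actually derandomizes later (Theorem~\ref{theorem-det-reduction} via the $2$-box covers of Lemma~\ref{lemma-2-boxes}), so it scales with any future improvement to deterministic $k$-SAT. Your walk analysis is sound: the case analysis of how a step changes $d_H(\alpha,\alpha^*)$ is right, and the final averaging over the random start is computed correctly. One quantitative point deserves care: guaranteeing merely $r$ non-stay moves is not enough to recover the gambler's-ruin bound $(b/a)^r$ (with exactly $r$ non-stay moves you only capture the monotone-descent paths, worth $(b/(a+b))^r$); you need $\Theta(r)$ non-stay moves with a sufficiently large constant, plus the finite-horizon version of the hitting-probability estimate, which costs only a $\poly(n)$ factor. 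Since $r \le n$, the non-stay probability is bounded below by the constant $1/(k(d-1))$, and the constant in $T = O(n)$ is yours to choose, this is easily repaired.
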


Again, for $d=2$ this is the running time of $\schoening$ for $k$-SAT. 
In his original paper~\cite{Schoening99}, Sch{\"o}ning describes how his
algorithm seamlessly generalizes to arbitrary domain sizes $d \ge 2$: in
each correction step, after a variable to reassign has been selected at random,
another random choice is made among the $d-1$ values it may be changed to.
The subsequent analysis in~\cite{Schoening99} also extends to this case.

However, there is a more direct way to reduce the $(d,\le k)$-CSP for
$d>2$ to the Boolean problem which is then able to use any $k$-SAT
algorithm as a black box: we simply select for each variable,
uniformly at random and independently from the other variables, $2$
out of the $d$ possible values in the domain. Any satisfying
assignment survives this restriction with probability exactly
$(2/d)^n$ and thus any $k$-SAT algorithm with success probability
$p^n$ generalizes to a $(d,\le k)$-CSP algorithm with success
probability $(2p/d)^n$. When plugging in $\schoening$ for $k$-SAT, we
obtain Theorem~\ref{theorem-sch-csp}.

In order to generalize our deterministic variant to arbitrary domain
sizes, we will choose the simple route and derandomize the
aforementioned reduction instead of trying to rework the whole
analysis from the previous section, with the additional advantage that
the result scales for any further improvement on the running time for
deterministic $k$-SAT.

\begin{theorem}
  There exists a deterministic algorithm having running time
  $O^*((d/2)^n)$ which takes any $(d,\le k)$-CSP $F$ over $n$
  variables and produces $l = O^*((d/2)^n)$ Boolean $(\le k)$-CNF
  formulas $\{G_i\}_{1 \le i \le l}$ such that $F$ is satisfiable if
  and only if there exists some $i$ such that $G_i$ is satisfiable.
\label{theorem-det-reduction}
\end{theorem}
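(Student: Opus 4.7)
The plan is to derandomize the probabilistic reduction described immediately before the theorem. Call a tuple $\sigma = (S_1, \ldots, S_n)$ with $S_i \in \binom{\{1,\ldots,d\}}{2}$ a \emph{$2$-restriction}, and say that $\sigma$ \emph{covers} an assignment $\alpha \in \{1,\ldots,d\}^n$ if $\alpha(x_i) \in S_i$ for every $i$. The crux of the theorem reduces to the following combinatorial task: construct deterministically, in time $O^*((d/2)^n)$, a family $\mathcal{F}$ of $2$-restrictions of size $O^*((d/2)^n)$ such that every $\alpha \in \{1,\ldots,d\}^n$ is covered by some $\sigma \in \mathcal{F}$. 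The rest of the theorem is then a routine encoding.

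For the covering family, I would imitate the probabilistic existence proof of Lemma~\ref{lemma-size-of-code}: a uniformly random $\sigma$ covers a fixed $\alpha$ with probability $(2/d)^n$, so by a union bound a random family of $\lceil n \ln(d) \cdot (d/2)^n \rceil$ such $2$-restrictions is already a covering with positive probability. To make this deterministic, I would process the coordinates in blocks. Partition the $n$ variables into $n/b$ blocks of size $b$, exhaustively compute an optimal (or near-optimal) covering of $\{1,\ldots,d\}^b$ by $2$-subset boxes, and take the Cartesian product over the blocks. Since $b$ is a constant chosen in advance, each per-block covering takes constant time, and the probabilistic argument restricted to one block yields a per-block covering of size at most $b \ln(d) \cdot (d/2)^b$; the global family then has size $(b \ln d)^{n/b} (d/2)^n$, which is absorbed into the $O^*$ bound when $b$ is taken large enough.

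For each $\sigma = (S_1, \ldots, S_n) \in \mathcal{F}$, write $S_i = \{a_i, b_i\}$ and associate a Boolean variable $y_i$ to $x_i$ via $y_i = 0 \leftrightarrow x_i = a_i$ and $y_i = 1 \leftrightarrow x_i = b_i$. A CSP literal $(x_i \ne c)$ is identically true whenever $c \notin S_i$ and may be dropped from its constraint; otherwise it becomes the Boolean literal $\bar y_i$ or $y_i$ according to whether $c = a_i$ or $c = b_i$. Every CSP constraint of width at most $k$ thus becomes a Boolean clause of width at most $k$, yielding a $(\le k)$-CNF formula $G_\sigma$. Correctness is immediate: if $F$ is satisfied by some $\alpha^*$, pick any $\sigma \in \mathcal{F}$ covering $\alpha^*$; the Boolean image of $\alpha^*$ under the encoding above satisfies $G_\sigma$. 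Conversely, any satisfying assignment of $G_\sigma$ lifts to a satisfying assignment of $F$.

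The main obstacle I expect is the deterministic construction of a sufficiently compact covering family: for odd $d$ the naive coordinate\-wise product construction (covering $\{1,\ldots,d\}$ by $\lceil d/2 \rceil$ pairs) costs $\lceil d/2 \rceil^n = ((d+1)/2)^n$, so the block-amortized covering argument above is what brings the size down to $(d/2)^n$ up to a prefactor that the $O^*$ notation can absorb, and some care is needed to ensure the preprocessing time also stays within the $O^*((d/2)^n)$ budget.
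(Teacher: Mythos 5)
Your overall architecture (a covering family of $2$-boxes, plus the per-box Boolean encoding and the two-way correctness argument) is the same as the paper's, and the encoding part is fine. But there is a genuine quantitative gap in your construction of the covering family, and it is exactly at the point where the paper has to work hardest. You partition the coordinates into $n/b$ blocks of \emph{constant} size $b$ and take the product of per-block coverings. Even if each per-block covering were optimal, for odd $d$ its size exceeds the volume bound $(d/2)^b$ by some factor $c_b>1$ (your own estimate gives $c_b\le b\ln d$), and this factor is incurred once per block, i.e.\ $n/b$ times. The resulting overhead $c_b^{\,n/b}$ is $2^{\Theta(n)}$ for every fixed $b$ --- it is \emph{not} a polynomial in $n$ and cannot be absorbed into $O^*$, which the paper explicitly defines to suppress only polynomial factors. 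Taking $b$ large only drives the base $c_b^{1/b}$ toward $1$, so what you actually prove is a bound of the form $O^*\bigl(((d/2)(1+\epsilon))^n\bigr)$ for every $\epsilon>0$, which is strictly weaker than the stated $O^*((d/2)^n)$. (Letting $b$ grow with $n$ does not rescue the argument either: the overhead remains superpolynomial, and the ``constant time per block'' claim for exhaustively optimizing the block covering breaks down.)

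The paper avoids this by inverting the roles of the two parameters: it uses a \emph{constant number} $b$ of blocks, each of dimension $n'=n/b$. Within a single block it cannot exhaustively optimize (the ground set has size $d^{n'}$), so it instead runs the greedy \textsc{Set-Cover} approximation, which loses only a factor $O(\log d^{n'})=O(n')$ over the probabilistic bound $\lceil n'\ln(d)(d/2)^{n'}\rceil$, giving a per-block covering of size $O((n')^2(d/2)^{n'})$ in time $\poly(d^{2n'})=d^{O(n/b)}$, which is negligible for large constant $b$. Concatenating the constant number $b$ of blocks then multiplies the polynomial per-block overhead only $b$ times, yielding $(n/b)^{2b}(d/2)^n=(d/2)^n\poly(n)$ as required. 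So the missing idea in your proposal is precisely this: the multiplicative slack must be paid a bounded number of times, which forces blocks of superconstant dimension and hence an efficient approximate (rather than exhaustive optimal) covering algorithm inside each block.
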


Using the $k$-SAT algorithm we developed in the previous section, we
then immediately get the derandomization of
Theorem~\ref{theorem-sch-csp}.

\begin{corollary}
  For every $\epsilon > 0$, there is a deterministic algorithm solving
  $(d,\le k)$-CSP in time
  $$
  O^*\left( \left(\frac{d(k-1)}{k}+\epsilon\right)^n \right) \ .
  $$
\label{corollary-det-csp}
\end{corollary}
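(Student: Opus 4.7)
The plan is to combine the two deterministic building blocks already in hand: the reduction of Theorem~\ref{theorem-det-reduction}, which converts a $(d,\le k)$-CSP instance on $n$ variables into $O^*((d/2)^n)$ Boolean $(\le k)$-CNF instances while preserving satisfiability as a disjunction, and Theorem~\ref{main-theorem}, which gives a deterministic $k$-SAT algorithm running in time $O^*\bigl(\bigl(\tfrac{2(k-1)}{k}+\epsilon'\bigr)^n\bigr)$ for any desired $\epsilon' > 0$. Given $\epsilon > 0$, the idea is to run the reduction first and then apply the $k$-SAT algorithm to each of the resulting Boolean formulas in turn, declaring $F$ satisfiable if and only if at least one of them is.

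First I would apply Theorem~\ref{theorem-det-reduction} to $F$ to obtain, deterministically and in time $O^*((d/2)^n)$, the list $\{G_i\}_{1 \le i \le l}$ of $(\le k)$-CNF formulas with $l = O^*((d/2)^n)$; correctness of the overall procedure then reduces, by the ``iff'' clause of that theorem, to solving the Boolean $k$-SAT instances. Next I would choose a slack parameter $\epsilon' := 2\epsilon/d$ (a positive constant, since $d$ and $\epsilon$ are fixed) and invoke the deterministic $k$-SAT algorithm of Theorem~\ref{main-theorem}, instantiated with $\epsilon'$, on each $G_i$, returning \texttt{true} as soon as one call succeeds and \texttt{false} otherwise.

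For the running time I would multiply the two factors: generating the $G_i$ costs $O^*((d/2)^n)$, and each of the $l = O^*((d/2)^n)$ subsequent calls costs $O^*\bigl(\bigl(\tfrac{2(k-1)}{k}+\epsilon'\bigr)^n\bigr)$, so the total is bounded, up to polynomial factors, by
$$
\left(\frac{d}{2}\right)^{n} \left(\frac{2(k-1)}{k}+\epsilon'\right)^{n}
= \left(\frac{d(k-1)}{k} + \frac{d\epsilon'}{2}\right)^{n}.
$$
Substituting $\epsilon' = 2\epsilon/d$ gives exactly the bound $O^*\bigl(\bigl(\tfrac{d(k-1)}{k}+\epsilon\bigr)^n\bigr)$ claimed in the corollary.

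There is essentially no obstacle: both ingredients have been designed with precisely this black-box composition in mind, and the only ``hard part'' is the rescaling $\epsilon \mapsto 2\epsilon/d$, which is immediate. The argument works uniformly for every constant $d \ge 2$ and $k \ge 2$ in the NP-complete regime, and it automatically inherits any future improvement to the deterministic $k$-SAT exponent, as promised in the discussion preceding Theorem~\ref{theorem-det-reduction}.
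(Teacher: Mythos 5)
Your proposal is correct and matches the paper's intent exactly: the paper gives no separate proof of Corollary~\ref{corollary-det-csp}, stating only that it follows immediately by composing Theorem~\ref{theorem-det-reduction} with the deterministic $k$-SAT algorithm of Theorem~\ref{main-theorem}, which is precisely your argument. Your explicit rescaling $\epsilon' = 2\epsilon/d$ and the product bound $\left(\frac{d}{2}\right)^n\left(\frac{2(k-1)}{k}+\epsilon'\right)^n = \left(\frac{d(k-1)}{k}+\epsilon\right)^n$ correctly fill in the arithmetic the paper leaves implicit.
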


\begin{proof}[Proof of Theorem~\ref{theorem-det-reduction}]
  We start with a useful definition. A {\em $2$-box} in
  $\{1,\dots,n\}^d$ is a set of the form $B := P_1 \times \dots \times
  P_n$, where $P_i \subseteq \{1,\dots,d\}$ and $|P_i| = 2$.  A
  $2$-box can be seen as a subcube of $\{1,\dots,d\}^n$ of side length
  $2$ and full dimension. A {\em random} $2$-box is a $2$-box sampled
  uniformly at random from all $2$-boxes in $\{1,\dots,d\}^n$: This
  can be done by sampling each $P_i$ independently, uniformly at
  random from all ${d \choose 2}$ pairs in $\{1,\dots,d\}$. As
  mentioned above, the probability that any fixed satisfying
  assignment of $F$ lies within
  a random $2$-box is $(2/d)^n$.\\
  
  In order to derandomize this technique, we need to deterministically 
  cover $\{1,\dots,d\}^n$  with $2$-boxes, in a fashion very similar to the covering
  codes used by Dantsin et al.~\cite{dantsin}:
  \begin{lemma}
    Let $d,n \in \mathbb{N}$. There is a set $\mathcal{B}$
    of $2$-boxes in $\{1,\dots,d\}^n$ such that
    $$
    \bigcup_{D \in \mathcal{B}} B = \{1,\dots,d\}^n 
    $$
    and 
    $$
    |\mathcal{B}| \leq \left(\frac{d}{2}\right)^n \poly(n) \ .
    $$
    Furthermore, $\mathcal{B}$ can be constructed in time
    $O(|\mathcal{B}|)$.    
    \label{lemma-2-boxes}
  \end{lemma}
  Given this lemma, our algorithm is complete: It first constructs
  such a suitably small set $\mathcal{B}$ of $2$-boxes, and then, for
  each $2$-box $P_1 \times \dots \times P_n = B \in \mathcal{B}$,
  outputs a $(\le k)$-CNF formula arising from $F$ by restricting the
  domain of the $i$\textsuperscript{th} variable to the values in
  $P_i$.  This finishes the proof of the theorem.
\end{proof}

It remains to prove the lemma.
\begin{proof}[Proof of Lemma~\ref{lemma-2-boxes}]
  Note that if $d$ is an even number, the proof is easy. For $1 \leq j
  \leq d/2$, define $P^{(j)} = \{2j-1, 2j\}$. Each element $w \in
  \{1,\dots,d/2\}^n$ defines the $2$-box
  $$
  B_w := P^{(w_1)} \times \dots \times P^{(w_n)} 
  $$
  and clearly
  $$
  \bigcup_{w \in \{1,\dots,d/2\}^n} B_w = \{1,\dots,d\}^n \ .
  $$
  The difficulty arises if $d$ is odd. As Dantsin et
  al.~\cite{dantsin}, we first show the existence of a suitable set of
  $2$-boxes, and then use a block construction and an approximation
  algorithm to obtain a construction.

  \begin{lemma}
    For any $n, d\in \mathbb{N}$, there is a set $\mathcal{B}$ of
    $2$-boxes such that $|\mathcal{B}| \leq \left\lceil n \ln(d)
      (d/2)^n\right\rceil$ such that $\bigcup_{B \in \mathcal{B}} =
    \{1,\dots,d\}^n$.
    \label{lemma-existance-box-cover}
  \end{lemma}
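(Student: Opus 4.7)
The plan is to prove Lemma~\ref{lemma-existance-box-cover} by a direct probabilistic argument, parallel to the proof of Lemma~\ref{lemma-size-of-code}. Set $m := \lceil n \ln(d)(d/2)^n \rceil$ and build $\mathcal{B}$ by sampling $m$ random $2$-boxes independently, where each random $2$-box is generated by choosing each $P_i$ uniformly at random from the $\binom{d}{2}$ two-element subsets of $\{1,\dots,d\}$, independently across coordinates.

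The key computation is the probability that a fixed point $w \in \{1,\dots,d\}^n$ is contained in a single random $2$-box $B = P_1 \times \dots \times P_n$. For each coordinate $i$, the value $w_i$ lies in $P_i$ iff $P_i$ is one of the $d-1$ pairs containing $w_i$, which happens with probability $(d-1)/\binom{d}{2} = 2/d$; by independence across coordinates, $\Pr[w \in B] = (2/d)^n$. Hence the probability that none of the $m$ sampled $2$-boxes contains $w$ is
$$
\left(1 - (2/d)^n\right)^m < e^{-m (2/d)^n} \leq e^{-n\ln d} = d^{-n}.
$$

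A union bound over all $d^n$ points of $\{1,\dots,d\}^n$ shows that the probability that some point fails to be covered is strictly less than $1$. Therefore a covering $\mathcal{B}$ of size at most $m = \lceil n \ln(d)(d/2)^n \rceil$ exists, which is what was claimed.

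I do not expect any real obstacle: the only step that requires some care is the counting that yields the single-coordinate probability $2/d$, which must be applied independently across the $n$ coordinates because the $P_i$ are chosen independently. Everything else is a verbatim reuse of the exponential estimate and the union bound already used in Lemma~\ref{lemma-size-of-code}. The additional work of Lemma~\ref{lemma-2-boxes} (turning existence into an explicit construction via block concatenation and an approximation algorithm for set cover) is carried out separately and is not part of the statement being proved here.
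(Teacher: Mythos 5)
Your proposal is correct and is exactly the argument the paper intends: the paper's proof of this lemma is a one-line reference to the probabilistic argument of Lemma~\ref{lemma-size-of-code}, and you have simply filled in the details (the per-coordinate probability $2/d$, the product $(2/d)^n$, the exponential estimate, and the union bound over $d^n$ points), all of which check out.
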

  \begin{proof}
    The proof works exactly like the proof of
    Lemma~\ref{lemma-size-of-code}.  We sample $\left\lceil n \ln(d)
      (d/2)^n\right\rceil$ many $2$-boxes independently, uniformly at
    random and show that with positive probability, the resulting set
    has the desired properties.
  \end{proof}
  To prove Lemma~\ref{lemma-2-boxes}, we have to derandomize the
  probabilistic argument we have just seen. For this, we choose a
  sufficiently large constant $b$, set $n' := n/b$ and construct an
  instance of \textsc{Set-Cover}: The ground set is
  $\{1,\dots,d\}^{n'}$ and the sets are all $2$-boxes therein, of
  which there are ${d \choose 2}^{n'} \leq d^{2n'}$.  We know from
  Lemma~\ref{lemma-existance-box-cover} that there is a cover of
  $2$-boxes of size $\left\lceil n \ln(d) (d/2)^n\right\rceil$.  There
  is a greedy algorithm for \textsc{Set-Cover} (see
  Hochbaum~\cite{Hochbaum} for example) achieving an approximation
  ratio of $O(\log N)$, where $N$ is the size of the ground set. Since
  in our case $\log N = \log(d^{n'}) = O(n')$, this algorithm will
  give us a set $\mathcal{B}$ of $2$-boxes covering
  $\{1,\dots,d\}^{n'}$ of size
  $$
  |\mathcal{B}| \in O\left( (n')^2 \left(\frac{d}{2}\right)^{n'} \right) \ .
  $$
  How much time do we need to construct $\mathcal{B}$? The greedy
  algorithm is polynomial in the size of its instance, which is
  $O(d^{2n'})$, thus it takes time $O(d^{2Cn'})$ for some constant
  $C$. By choosing $b$ large enough, we can make sure that $d^{2Cn'} =
  d^{2Cn/b}$ is smaller than the running time we are aiming at.
  Finally, we obtain a set of $2$-boxes in $\{1,\dots,d\}^n$ by
  ``concatenating'' the boxes in $\mathcal{B}$: We identify a tuple
  $(B_1,\dots,B_b) \in \mathcal{B}^b$ with the $2$-box $B_1 \times
  \dots \times B_b$, and therefore $\mathcal{B}^b$ is a set of
  $2$-boxes covering $\{1,\dots,n\}$, and
  $$
  |\mathcal{B}|^b \leq 
  O\left( \left((n')^2 \left(\frac{d}{2}\right)^{n'}\right)^b \right)
  = O\left(   \left(\frac{n}{b}\right)^{2b} \left(\frac{d}{2}\right)^n \right)  
  = \left(\frac{d}{2}\right)^n \poly(n)
 \ .
  $$
  This is a set of $2$-boxes covering $\{1,\dots,d\}^n$ of the
  desired size, finishing the proof of Lemma~\ref{lemma-2-boxes}.
\end{proof}

\section*{Acknowledgments}

We thank our supervisor Emo Welzl for continuous support. The second
author thanks Konstantin Kutzkov for the fruitful collaboration
on~\cite{kutzkov-scheder}.

\bibliographystyle{abbrv}
\bibliography{refs}

\end{document}